 \DeclareMathOperator*{\argmin}{argmin}
\newtheorem{proposition}{\textbf{Proposition}}
\renewcommand{\l}{\ell}
\newcommand{\norm}[1]{\lVert#1\rVert}
\newcommand{\card}[1]{\lvert#1\rvert}
\newcommand{\R}{{\mathbb{R}}}
\newcommand{\Rbb}{\mathbb{R}}
\renewcommand{\L}{{\mathcal{L}}}
\newcommand{\Scal}{\mathcal{S}}
\newcommand{\transpose}{{\!\scriptscriptstyle\mathrm T}}  
\def\L{{\cal L}}
\title{Chebyshev Polynomial Approximation for Distributed Signal Processing}
\author{\IEEEauthorblockN{David I Shuman, Pierre Vandergheynst, and Pascal Frossard}\thanks{This work was supported in part by FET-Open grant number 255931 UNLocX. 
The authors would also like to thank Javier P{\'e}rez-Trufero for his help producing some of the graphics in this paper.}
\IEEEauthorblockA{Ecole Polytechnique F{\'e}d{\'e}rale de Lausanne (EPFL) \\ Signal Processing Laboratory \\ CH-1015 Lausanne, Switzerland \\ \{david.shuman,~pierre.vandergheynst,~pascal.frossard\}@epfl.ch}}
\begin{document}
%
\maketitle
\begin{abstract}
Unions of graph Fourier multipliers are an important class of linear operators for processing signals defined on graphs. We present a novel method to efficiently distribute the application of these operators to the high-dimensional signals collected by sensor networks. The proposed method
features approximations of the graph Fourier multipliers by shifted Chebyshev polynomials, whose recurrence relations make them readily amenable to distributed computation. We demonstrate how the proposed method can be used in a distributed denoising task, and 
show that the communication requirements of
the method scale gracefully with the size of the network. 
\end{abstract}
\begin{keywords}
Chebyshev polynomial approximation, denoising, distributed optimization, regularization, signal processing on graphs, spectral graph theory, wireless sensor networks
\end{keywords}
\section{Introduction}\label{sec:intro}

Wireless sensor networks are now prevalent in applications such as environmental 
monitoring, target
tracking, surveillance, medical diagnostics, and manufacturing process flow. The sensor nodes are often deployed \emph{en masse} to collectively achieve tasks such as estimation, detection, classification, and localization.
While such networks have the ability to collect large amounts of data in a short time, they also face a number of resource constraints. First, they are energy constrained, as they are often expected to operate for long periods of time without human intervention, despite being powered by batteries or energy harvesting. Second, they may have limited communication range and 
capacity due to the need to save energy. Third, they may have limited on-board processing capabilities. Therefore, it is critical to develop distributed algorithms for in-network data processing that help balance the trade-offs between performance, communication bandwidth, and computational complexity.

Due to the limited communication range of wireless sensor nodes, 
 each sensor node in a large network is likely to communicate with only a small number of other nodes in the network. To model the communication patterns, we can write down a graph with each vertex corresponding to a sensor node and each edge corresponding to a pair of nodes that communicate. Moreover, because the communication graph is a function of the distances between nodes, it often captures spatial correlations between sensors' observations as well.
 That is, if two sensors are close enough to communicate, their observations are more likely to be correlated.
We can further specify these 
spatial correlations by adding weights to the edges of the 
graph, with higher weights associated to edges connecting sensors with closely correlated observations. For example,
it is common to construct the graph with
a thresholded Gaussian kernel weighting function based on the physical distance between nodes, where the weight of edge $e$ connecting nodes $i$ and $j$ that are a distance $d(i,j)$ 
apart is 
\begin{eqnarray}\label{Eq:gkw}
w(e)=
\begin{cases}
\exp\left({-\frac{[d(i,j)]^2}{2\sigma^2}}\right) &\mbox{if } d(i,j) \leq \kappa \\
0 &\mbox{otherwise}
\end{cases},
\end{eqnarray}
for some parameters $\sigma$ and $\kappa$.

In this paper, we consider signals collected by a sensor network whose nodes can only send messages to their local neighbors
 (i.e., 
they cannot communicate directly with a central entity). While much 
of the literature on distributed signal processing (see, e.g., \cite{Rabbat}-\nocite{predd, olfati}\cite{dimakis} and references therein) focuses 
on 
coming to an agreement on simple features of the observed signal (e.g., consensus averaging, parameter estimation), we are more interested in processing the full function in a distributed manner, with each node having its own objective.
%
%
Some example tasks under this umbrella include:
\begin{itemize}
\item \emph{Distributed denoising} -- In a sensor network of $N$ sensors, a noisy $N$-dimensional signal is observed, with each component of the signal corresponding to the observation at one sensor location. Using the prior knowledge that the denoised signal should be smooth or piecewise smooth with respect to the underlying weighted graph structure, the sensors' task is to denoise each of their components of the signal by iteratively passing messages to their local neighbors and performing computations.
\item \emph{Distributed semi-supervised learning / binary classification} -- A binary label (-1 or 1) is associated with each sensor node; however, only a small number of nodes in the network have knowledge of their labels. The cooperative task is for each node to learn its label by iteratively passing messages to its local neighbors and performing computations.
\end{itemize}

These and 
similar tasks have been considered in centralized settings in the relatively young field of signal processing on graphs. For example, \cite{smola}-\nocite{zhou_scholkopf}\cite{reg_discrete} consider general regularization frameworks on weighted graphs; \cite{harmonic}-\nocite{belkin_matveeva}\cite{zhou_bousquet} present graph-based semi-supervised learning methods; and \cite{bougleux}-\nocite{elmoataz, hancock}\cite{peyre_nlr} consider regularization and filtering on weighted graphs for image and mesh processing.
In distributed settings, \cite{wagner} considers denoising via wavelet processing and \cite{barbarossa} presents a denoising algorithm that projects the measured signal onto a low-dimensional subspace spanned by smooth functions.
References \cite{guestrin}-\nocite{predd_tit}\cite{dlasso} consider different distributed regression problems.

Our main contributions in this paper are i) to show that a key component of many distributed signal processing tasks is the application of linear operators that are unions of graph Fourier multipliers; and ii) to present a novel method to efficiently distribute the application of the graph Fourier multiplier operators to the high-dimensional signals collected by sensor networks.

To elaborate a bit, graph Fourier multiplier operators are the graph analog of filter banks, one of the most commonly used tools in digital signal processing. Multiplying a signal on the graph by one of these matrices is analogous to reshaping the signal's frequencies by multiplying it by a filter in the Fourier domain in classical signal processing. The crux of our novel distributed computational method is to approximate each graph Fourier multiplier by a truncated Chebyshev polynomial expansion. In a centralized setting, \cite{LTS-ARTICLE-2009-053} shows that the truncated Chebyshev polynomial expansion efficiently approximates the application of a spectral graph wavelet transform, which is a specific example of a union of graph Fourier multipliers.
In this paper, we extend the Chebyshev polynomial approximation method to the general class of unions of graph Fourier multiplier operators, and show how the recurrence properties of the Chebyshev polynomials also enable distributed application of these operators. The communication requirements for distributed computation using this method scale gracefully with the number of sensors in the network (and, accordingly, the size of the signals).

The remainder of the paper is as follows. In the next section, we provide 
some background from
spectral graph theory. In Section \ref{Se:GFM}, we introduce graph Fourier multiplier operators and show how they can be efficiently approximated with shifted Chebyshev polynomials in a centralized setting. We then discuss the distributed computation of quantities involving these operators in Section \ref{Se:distribution}, and provide some application examples in Section \ref{Se:applications}. Section \ref{Se:conclusion} concludes the paper.

\section{Spectral Graph Theory} \label{Se:SGT}
Before proceeding, we introduce some basic notations and definitions from spectral graph theory \cite{chung}. We model the sensor network with an undirected, weighted graph $G = \{E,V,w\}$, which consists of a set of vertices $V$, a
set of edges $E$, and a weight function $w:E\to\mathbb{R}^+$ that assigns a
non-negative weight to each edge. We assume the number of sensors in the network, $N=|V|$, is finite, and the graph is connected. The adjacency (or weight) matrix $A$ for a weighted graph $G$ is the $N\times N$ matrix with entries $A_{m,n}$, where
\begin{equation*}
A_{m,n} =
\begin{cases}
w(e), &\mbox{ if $e\in E$ connects vertices $m$ and $n$} \\
0, &\mbox{ otherwise}
\end{cases}~.
\end{equation*}
The degree of each vertex is the sum of the weights of all the edges incident
to it. We define the degree matrix $D$
to have diagonal elements equal to the degrees, and zeros elsewhere.
The non-normalized graph Laplacian is defined as $\L:=D-A$. 
For any $f\in \mathbb{R}^N$ on the vertices of the graph, $\L$ satisfies
\begin{equation*}
(\L f)(m) = \sum_{m\sim n} A_{m,n} \cdot [f(m) - f(n)],
\end{equation*}
where $m\sim n$ indicates 
vertices $m$ and $n$ are connected. 

As the graph Laplacian $\L$ is a real symmetric
matrix, it has a complete set of orthonormal eigenvectors. We denote
these by $\chi_{\l}$ for $\l=0, \hdots ,N-1$, with associated real, non-negative eigenvalues
$\lambda_{\l}$ satisfying $\L \chi_{\l} = \lambda_{\l} \chi_{\l}$.
Zero appears as an eigenvalue with multiplicity equal to the
number of connected components of the graph
\cite{chung}.
%
%
Without loss of generality, we assume the eigenvalues of the Laplacian of the connected graph $G$ to be ordered as
\begin{equation*}
0=\lambda_0 < \lambda_1 \leq \lambda_2 ... \leq \lambda_{N-1}:=\lambda_{\max}.
\end{equation*}

Just as the classical Fourier transform is the expansion of a function $f$ in terms of the eigenfunctions of the Laplace operator
\begin{equation*}
\hat{f}(\omega)= \langle e^{i \omega x},f\rangle = \int\limits_{\Rbb}f(x)  e^{-i\omega x}~dx,
\end{equation*}
the \emph{graph Fourier transform} $\hat{f}$ of any function $f\in\mathbb{R}^{N}$ on the vertices of $G$
is the expansion of $f$ in terms of the eigenfunctions of the graph Laplacian. It is defined by
\begin{equation}\label{Eq:graph_FT}
\hat{f}(\l) := \langle \chi_{\l},f\rangle = \sum_{n=1}^N \chi^*_{\l}(n) f(n),
\end{equation}
where we adopt the convention that the inner product be
  conjugate-linear in the first argument. The \emph{inverse graph Fourier transform} is given by
\begin{equation}\label{Eq:graph_IFT}
f(n) = \sum_{\l=0}^{N-1} \hat{f}(\l) \chi_{\l}(n).
\end{equation}

\section{Chebyshev Polynomial Approximation of Graph Fourier Multipliers} \label{Se:GFM}
In this section, we introduce graph Fourier multiplier operators, unions of graph Fourier multiplier operators, and a computationally efficient approximation to unions of graph Fourier multiplier operators based on shifted Chebyshev polynomials.
All 
methods discussed here 
are for
a centralized setting, and we extend them to a distributed setting in Section \ref{Se:distribution}.


\subsection{Graph Fourier Multiplier Operators} \label{Se:gfmo}
For a function $f$ defined on the real line,
a \emph{Fourier multiplier operator} or \emph{filter} $\Psi$ reshapes the function's
frequencies through multiplication in the Fourier domain:
\begin{equation*}
\widehat{\Psi f}(\omega) = g(\omega)\hat{f}(\omega),\hbox{ for every frequency } \omega.
\end{equation*}
Equivalently,
denoting the Fourier and inverse Fourier transforms by ${\cal F}$ and ${\cal F}^{-1}$, we have
\begin{align}\label{Eq:multiplier_def}
\Psi f(x) &= {\cal F}^{-1}\Bigl(g(\omega){\cal F}(f)(\omega)\Bigr)(x) \\
&= \frac{1}{2\pi}\int\limits_{\Rbb}g(\omega) \hat{f}(\omega) e^{i\omega x}~d\omega. \nonumber
\end{align}

We can extend this straightforwardly to functions defined on the vertices of a graph (and in fact to any group with a Fourier transform) 
by replacing the Fourier transform and its inverse in \eqref{Eq:multiplier_def} with the graph Fourier transform and its inverse, defined in \eqref{Eq:graph_FT} and \eqref{Eq:graph_IFT}. 
Namely, a \emph{graph Fourier multiplier operator} is a linear operator $\Psi: \Rbb^N \rightarrow \Rbb^{N}$ that can be written as
\begin{align}\label{Eq:gfm_def}
\Psi f(n) &= {\cal F}^{-1}\Bigl(g(\lambda_{\l}){\cal F}(f)(\l)\Bigr)(n) \nonumber \\
&= \sum\limits_{\l=0}^{N-1}g(\lambda_{\l}) \hat{f}(\l) \chi_{\l}(n).
\end{align}
We refer to $g(\cdot)$ as the \emph{multiplier}. 
A high-level intuition behind \eqref{Eq:gfm_def} is as follows. The eigenvectors corresponding to the lowest eigenvalues of the graph Laplacian are the ``smoothest'' in the sense that $\left|\chi_{\l}(m)-\chi_{\l}(n)\right|$ is small for neighboring vertices $m$ and $n$. At the extreme is $\chi_{0}$, which is a constant vector ($\chi_{0}(m)=\chi_{0}(n)$ for all $m$ and $n$). The inverse graph Fourier transform \eqref{Eq:graph_IFT} provides a representation of a signal $f$ as a superposition of the orthonormal set of eigenvectors of the graph Laplacian.
The effect of
the graph Fourier multiplier operator $\Psi$ is to modify
the contribution of each eigenvector. 
For example, applying a multiplier $g(\cdot)$ that is 1 for all $\lambda_{\l}$ below some threshold, and 0 for all $\lambda_{\l}$ above the threshold is equivalent to projecting the signal onto the eigenvectors of the graph Laplacian associated with the lowest 
eigenvalues. 
This is analogous to low-pass filtering in the continuous domain. Section \ref{Se:applications} contains further intuition about and examples of graph Fourier multiplier operators. For more properties of the graph Laplacian eigenvectors, see \cite{lap_eigen}.

\subsection{Unions of Graph Fourier Multiplier Operators}
In order for our distributed computation method of the next section to be applicable to a wider range of applications, we can generalize slightly from graph Fourier multipliers to
\emph{unions of graph Fourier multiplier operators}. A union of graph Fourier multiplier operators is a linear operator
$\Phi: \Rbb^N \rightarrow \Rbb^{\eta N}$ ($\eta \in \{1,2,\ldots\}$) whose application to a function
$f \in \Rbb^N$ can be written as (see also Figure \ref{Fig:union})
\begin{align*}
\Phi f &=
\left[
\Psi_1; \Psi_2; \ldots; \Psi_{\eta}
\right] f \\
&=\left[(\Psi_1 f)_1;\ldots;(\Psi_1 f)_N;\ldots;(\Psi_{\eta} f)_1;\ldots;(\Psi_{\eta} f)_N
\right] \\
&=
 \left[
(\Phi f)_1;
(\Phi f)_2;
\ldots;
(\Phi f)_{\eta N}
\right], 
\end{align*}
where for every $j$, $\Psi_j: \Rbb^N \rightarrow \Rbb^{N}$ is a graph Fourier multiplier operator with multiplier 
$g_j(\cdot)$, and
\begin{align} \label{Eq:operator_def}
&\left(\Phi f\right)_{(j-1)N+n} = \sum_{\l=0}^{N-1}g_j(\lambda_{\l}) \hat{f}(\l) \chi_{\l}(n), \\
& \hspace{2.2cm}\hbox{for }j\in\{1,2,\ldots,\eta\},~ n\in\{1,2,\ldots,N\}. \nonumber
\end{align}
\begin{figure}
\centering{
\includegraphics[width=3.2in]{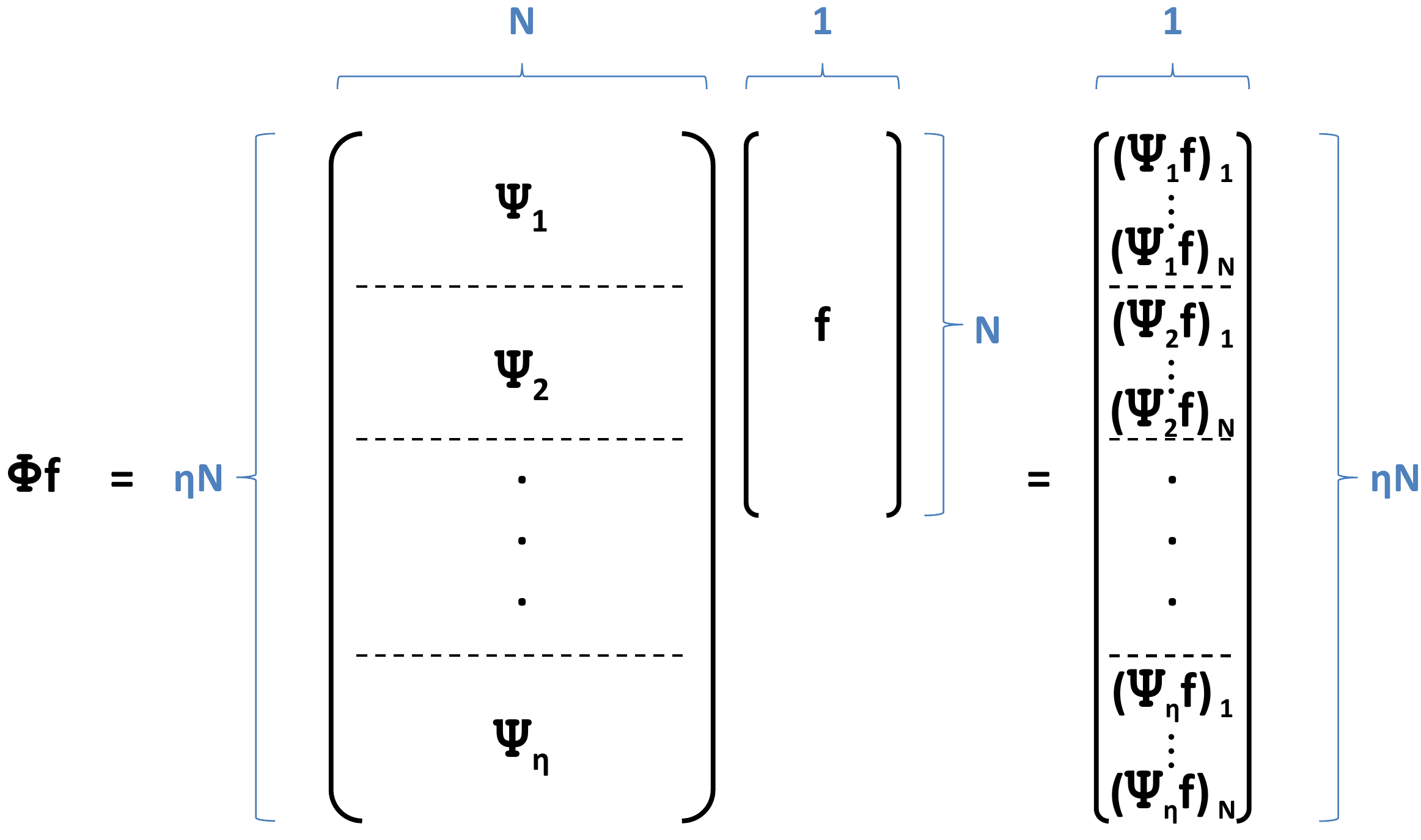}
}\caption{Application of a union of graph Fourier multiplier operators.} \label{Fig:union}
\end{figure}

\subsection{The Chebyshev Polynomial Approximation}
Exactly computing $\Phi f$ requires explicit computation of the entire set of eigenvectors and eigenvalues of $\L$, which becomes computationally challenging as the size of the network, $N$, increases, even in a centralized setting. As discussed in detail in \cite[Section 6]{LTS-ARTICLE-2009-053}, a computationally efficient approximation $\tilde{\Phi}f$ of $\Phi f$ can be computed by approximating each 
multiplier $g_j(\cdot)$ by a truncated series of shifted Chebyshev polynomials. Doing so circumvents the need to compute the full set of eigenvectors and eigenvalues of $\L$. We summarize this approach below.

For $y \in [-1,1]$, the Chebyshev polynomials $\left\{T_k(y)\right\}_{k=0,1,2,\ldots}$ are generated by
\begin{eqnarray*}
T_k(y):=\begin{cases}
1,&\hbox{ if }k=0 \\
y,&\hbox{ if }k=1 \\
2yT_{k-1}(y)-T_{k-2}(y),&\hbox{ if }k\geq 2
\end{cases}.
\end{eqnarray*}
These Chebyshev polynomials form an orthogonal basis for \\
$L^2\left([-1,1],\frac{dy}{\sqrt{1-y^2}} \right)$. So 
every function $h$ on $[-1,1]$ that is square integrable with respect to the measure $dy/\sqrt{1-y^2}$ can be represented as
$h(y)=\frac{1}{2}b_0+\sum_{k=1}^{\infty}b_k T_k(y)$,
where $\{b_k\}_{k=0,1,\ldots}$ is a sequence of Chebyshev coefficients that depends on $h(\cdot)$.
For a detailed overview of Chebyshev polynomials, including the above definitions and properties,
see \cite{handscomb}-\nocite{phillips}\cite{rivlin}.

By shifting the domain of the Chebyshev polynomials to $[0,\lambda_{\max}]$ via the transformation $x=\frac{\lambda_{\max}}{2}(y+1)$, we can represent each 
multiplier as
\begin{eqnarray} \label{Eq:multiplier_expansion}
g_j(x)=\frac{1}{2}c_{j,0}+\sum_{k=1}^{\infty}c_{j,k}\overline{T}_k(x), \hbox{ for all }x \in [0,\lambda_{\max}],
\end{eqnarray}
where
\begin{align}\label{Eq:cheb_coeffs}
&\overline{T}_k(x):=T_k\left(\frac{x-\alpha}{\alpha}\right),  \nonumber \\
& \alpha:=\frac{\lambda_{\max}}{2}, \hbox{ and } \nonumber \\
& c_{j,k}:= \frac{2}{\pi}\int_{0}^{\pi}\cos(k\theta)~g_j\Bigl(\alpha \bigl(\cos(\theta) +1\bigr)\Bigr)~d\theta.
\end{align}
For $k \geq 2$, the shifted Chebyshev polynomials
satisfy
\begin{eqnarray*}
\overline{T}_k(x) = \frac{2}{\alpha}(x-\alpha)\overline{T}_{k-1}(x)
- \overline{T}_{k-2}(x).
\end{eqnarray*}
Thus, for any $f\in \R^N$, we have
\begin{equation} \label{eq:cheby-f-recurrence}
  \overline{T}_k(\L)f =
  \frac{2}{\alpha}({\L}-{\alpha}I)\left( \overline{T}_{k-1}(\L) f\right)
  - \overline{T}_{k-2}(\L)f,
\end{equation}
where $\overline{T}_k(\L) \in \Rbb^{N \times N}$ and the $n^{th}$ element of $\overline{T}_k(\L)f$ is given by
\begin{eqnarray}\label{Eq:T_bar_def}
\left(\overline{T}_k(\L)f\right)_n:=\sum_{\l=0}^{N-1} \overline{T}_k(\lambda_{\l})\hat{f}(\l)\chi_{\l}(n).
\end{eqnarray}

Now, to approximate the operator $\Phi$, we can approximate each 
multiplier $g_j(\cdot)$ by the first $M$ terms in its Chebyshev polynomial expansion \eqref{Eq:multiplier_expansion}.
Then,
for every $j \in \{1,2,\ldots,\eta\}$ and $n \in \{1,2,\ldots,N\}$, we have
\begin{align} \label{Eq:cheby_approx}
&\left(\tilde{\Phi} f\right)_{(j-1)N+n} \nonumber \\
&\quad\quad := \left(\frac{1}{2}c_{j,0}f+\sum_{k=1}^{M}c_{j,k}\overline{T}_k(\L)f\right)_n \\
&\quad \stackrel{(\ref{Eq:graph_IFT}), (\ref{Eq:T_bar_def})}= \sum_{\l=0}^{N-1}\left[\frac{1}{2}c_{j,0}+\sum_{k=1}^{M}c_{j,k}\overline{T}_k(\lambda_{\l})\right] \hat{f}(\l) \chi_{\l}(n) \nonumber \\
& \quad\quad \approx \sum_{\l=0}^{N-1}\left[\frac{1}{2}c_{j,0}+\sum_{k=1}^{\infty}c_{j,k}\overline{T}_k(\lambda_{\l})\right] \hat{f}(\l) \chi_{\l}(n)  \nonumber \\
&\quad\quad \stackrel{(\ref{Eq:multiplier_expansion})} = \sum_{\l=0}^{N-1}g_j(\lambda_{\l}) \hat{f}(\l) \chi_{\l}(n) \nonumber
\\
&\quad\quad \stackrel{(\ref{Eq:operator_def})} = \left(\Phi f\right)_{(j-1)N+n} . \nonumber
\end{align}

To recap, we propose to compute $\tilde{\Phi}f$ by first computing the Chebyshev coefficients $\{c_{j,k}\}_{j=1,2,\ldots,\eta;~k=1,2,\ldots,M}$ according to \eqref{Eq:cheb_coeffs}, and then computing the sum in \eqref{Eq:cheby_approx}. The computational benefit of the Chebyshev polynomial approximation arises in \eqref{Eq:cheby_approx}
from the fact the vector $\overline{T}_k(\L)f$ can be computed recursively from $\overline{T}_{k-1}(\L)f$ and $\overline{T}_{k-2}(\L)f$ according to \eqref{eq:cheby-f-recurrence}.
The computational cost of doing so is dominated by the matrix-vector multiplication of the graph Laplacian $\L$, which is proportional to the number of edges, $|E|$ \cite{LTS-ARTICLE-2009-053}. Therefore, if the underlying communication graph is sparse (i.e., $|E|$ scales linearly with the network size $N$), it is far more computationally efficient to compute $\tilde{\Phi}f$ than $\Phi f$. Finally, we note that in practice, setting the Chebyshev approximation order $M$ to around 20 results in $\tilde{\Phi}$ approximating $\Phi$ very closely in all of the applications we have examined.

\section{Distributed Computation} \label{Se:distribution}
In the previous section, we showed that the Chebyshev polynomial approximation to a union of graph Fourier multipliers provides computational efficiency gains, even in a centralized computation setting. In this section, we discuss the second benefit of the Chebyshev polynomial approximation: it is easily distributable.

\subsection{Distributed Computation of $\tilde{\Phi}f$} \label{Se:forward}
We consider the following scenario. There is a network of $N$ nodes, and each node $n$ begins with the following knowledge:
\begin{itemize}
\item $f(n)$, the $n^{th}$ component of the signal $f$
\item The identity of its neighbors, and the weights of the graph edges connecting itself to each of its neighbors
\item The first $M$ Chebyshev coefficients, $c_{j,k}$, for $j \in \{1,2,\ldots,\eta\}$ and $k \in \{0,1,2,\ldots,M\}$. These can either be computed centrally according to \eqref{Eq:cheb_coeffs} and then transmitted throughout the network, or each node can begin with knowledge of the
multipliers, $\{g_j(\cdot)\}_{j=1,2,\ldots,\eta}$, and precompute the 
Chebyshev coefficients according to \eqref{Eq:cheb_coeffs}
\item An upper bound on $\lambda_{\max}$, the largest eigenvalue of the graph Laplacian. This bound need not be 
tight, so we can precompute a bound such as $\lambda_{\max}\leq \max\{d(m)+d(n); m \sim n \}$, where $d(n)$ is the degree of node $n$ \cite{anderson_morley} 
\end{itemize}

The task is for each network node $n$ to compute
\begin{eqnarray}\label{Eq:ntarget}
\left\{\left(\tilde{\Phi} f\right)_{(j-1)N+n}\right\}_{j=1,2,\ldots,\eta}
\end{eqnarray}
by iteratively exchanging messages with its local neighbors in the network and performing some computations.

As a result of \eqref{Eq:cheby_approx}, for node $n$ to compute the desired sequence in \eqref{Eq:ntarget}, it suffices to learn $\left\{\left(\overline{T}_k(\L)f\right)_n\right\}_{k=1,2,\ldots, M}$. 
Note that $\left(\overline{T}_1({\L})f\right)_n=\left(\frac{1}{\alpha}({\L}-{\alpha}I)f\right)_n$ and $\L_{n,m}=0$ for all nodes $m$ that are not neighbors of node $n$. Thus, to compute $\left(\overline{T}_1({\L})f\right)_n$, sensor node $n$ just needs to receive $f(m)$ from all neighbors $m$. So once all nodes send their component of the signal to their neighbors, they are able to compute their respective components of $\overline{T}_1({\L})f$. In the next step, each node $n$ sends the newly computed quantity $\left(\overline{T}_1({\L})f\right)_n$ to all of its neighbors, enabling the distributed computation of $\overline{T}_2({\L})f$ according to \eqref{eq:cheby-f-recurrence}. The iterative process of computation and local communication continues for $M$ rounds until each node $n$ has computed the required sequence $\left\{\left(\overline{T}_k(\L)f\right)_n\right\}_{k=1,2,\ldots, M}$. In all,
$2M\card{{E}}$ messages of length 1 are required for every node $n$ to compute its sequence of coefficients in \eqref{Eq:ntarget} in a distributed fashion. This distributed computation process is summarized in Algorithm 1. 

\begin{algorithm}[t] \label{alg1}
\caption{Distributed Computation of $\tilde{\Phi}f$}
      Inputs at node $n$: $f_n$, $\L_{n,m}~\forall m$, $\left\{c_{k,j}\right\}_{j=1,2,\ldots,\eta;~k=0,1,\ldots,M}$, \\
      and $\lambda_{\max}$ \\
   Outputs at node $n$: $\left\{\left(\tilde{\Phi} f\right)_{(j-1)N+n}\right\}_{j=1,2,\ldots,\eta}$ \\ 
       \begin{algorithmic}[1]
   \STATE Set $\left(\overline{T}_0(\L)f\right)_n = f_n$
   \STATE Transmit $f_n$ to all neighbors ${\cal N}_n:=\{m:\L_{n,m} < 0\}$
   \STATE Receive $f_m$ from all neighbors ${\cal N}_n$
   \STATE Compute and store
   \begin{align*}
   \left(\overline{T}_1(\L)f\right)_n =
  \sum\limits_{m \in {\cal N}_n \cup n}\frac{1}{\alpha}\L_{n,m}f_m  -f_n
  \end{align*}
   \FOR{$k=2,\ldots,M$}
   \STATE Transmit $\left(\overline{T}_{k-1}(\L)f\right)_n$ to all neighbors ${\cal N}_n$
    \STATE Receive $\left(\overline{T}_{k-1}(\L)f\right)_m$ from all neighbors ${\cal N}_n$
   \STATE Compute and store
   \begin{align*}
   \left(\overline{T}_k(\L)f\right)_n =&
 \sum\limits_{m \in {\cal N}_n \cup n} \frac{2}{\alpha}\L_{n,m}\left(\overline{T}_{k-1}(\L)f\right)_m  \\
 &-2\left(\overline{T}_{k-1}(\L)f\right)_n
  - \left(\overline{T}_{k-2}(\L)f\right)_n
  \end{align*}
   \ENDFOR
   \FOR{$j \in \{1,2,\ldots,\eta\}$}
   \STATE Output
   \begin{align*}
   \left(\tilde{\Phi} f\right)_{(j-1)N+n} = \frac{1}{2}c_{j,0}f_n+\sum\limits_{k=1}^M c_{j,k} \left(\overline{T}_k(\L)f\right)_n
   \end{align*}
   \ENDFOR
   \end{algorithmic}
   \end{algorithm}


An important point to emphasize again is that although the operator $\Phi$ and its approximation $\tilde{\Phi}$ are defined through the eigenvectors of the graph Laplacian, the Chebyshev polynomial approximation helps the sensor nodes 
apply the operator to the signal without explicitly computing (individually or collectively) the eigenvalues or eigenvectors of the Laplacian, other than the upper bound on its spectrum. Rather, they initially communicate their component of the signal to their neighbors, and then communicate simple weighted combinations of the messages received in the previous stage in subsequent iterations. 
In this way, information about each component of the signal $f$ diffuses through the network
without direct communication between non-neighboring nodes.

\subsection{Distributed Computation of $\tilde{\Phi}^*a$} \label{Se:adj_d}
The application of the adjoint $\tilde{\Phi}^*$ of the Chebyshev polynomial approximate operator $\tilde{\Phi}$ can also be computed in a distributed manner. Let
\begin{eqnarray*}
a  = \left[ a_1; a_2; \ldots; a_{\eta} \right] \in \Rbb^{\eta N},
\end{eqnarray*}
where $a_j \in \Rbb^N$. Then it is straightforward to show that
\begin{align}\label{Eq:adjoint_split}
\left(\tilde{\Phi}^*a\right)_n=\sum_{j=1}^{\eta}\left(\frac{1}{2}c_{j,0}a_j+\sum_{k=1}^{M}c_{j,k}\overline{T}_k(\L)a_j \right)_n.
\end{align}
We assume each node $n$ starts with knowledge of $a_j(n)$ for all $j \in \{1,2,\ldots,\eta\}$. For each $j \in \{1,2,\ldots,\eta\}$, the distributed computation of the corresponding term on the right-hand side of \eqref{Eq:adjoint_split} is done in an analogous manner to the distributed computation of $\tilde{\Phi}f$ discussed above. Since this has to be done for each $j$, $2M|E|$ messages, each a vector of length $\eta$, are required for every node $n$ to compute $\left(\tilde{\Phi}^*a\right)_n$.

\subsection{Distributed Computation of $\tilde{\Phi}^*\tilde{\Phi}f$}\label{Se:wwstar} 
Using the property of Chebyshev polynomials that
\begin{eqnarray*}
T_k(x)T_{k^{\prime}}(x)=\frac{1}{2}\left[T_{k+k^{\prime}}(x)+T_{|k-k^{\prime}|}(x) \right],
\end{eqnarray*}
we can write (see \cite{LTS-ARTICLE-2009-053} for a similar calculation)
\begin{eqnarray*}
\left(\tilde{\Phi}^*\tilde{\Phi}f\right)_n=\left(\frac{1}{2}d_0 f + \sum_{k=1}^{2M} d_k \overline{T}_k(\L)f \right)_n.
\end{eqnarray*}
Therefore, with each node $n$ starting with $f(n)$ as in Section \ref{Se:forward}, the nodes can compute $\tilde{\Phi}^*\tilde{\Phi}f$ in a distributed manner using $4M|E|$ messages of length 1, with each node $n$ finishing with knowledge of $\left(\tilde{\Phi}^*\tilde{\Phi}f\right)_n$.





\section{Application Examples}\label{Se:applications}
In this section, we provide more detailed explanations of how the Chebyshev polynomial approximation of graph Fourier multipliers can be used in the context of specific distributed signal processing tasks.
\subsection{Distributed Smoothing}
Perhaps the simplest example application is distributed smoothing with the heat kernel as the graph Fourier multiplier. One way to smooth a signal $y\in \Rbb^N$ is to compute $H_t y$, where, for a fixed $t$,
$(H_ty)(n):=\sum_{\l=0}^{N-1} e^{-t\lambda_{\l}}\hat{y}(\l)\chi_{\l}(n)$.
$H_t$ clearly satisfies our definition of a graph Fourier multiplier operator (with $\eta=1$).
In the context of a centralized image smoothing application, \cite{hancock} discusses in detail the \emph{heat kernel}, $H_t$, and its relationship to classical Gaussian filtering. Similar to the example at the end of Section \ref{Se:gfmo}, the main idea is 
that the multiplier $e^{-t\lambda_{\l}}$ acts as a low-pass filter that attenuates the higher frequency (less smooth) components of 
$y$.

Now, to perform distributed smoothing, we just need to compute $\tilde{H}_t y$ in a distributed manner according to Algorithm 1, where $\tilde{H}_t$ is the shifted Chebyshev polynomial approximation to the graph Fourier multiplier operator $H_t$.



\subsection{Distributed Regularization} \label{Se:reg}
Regularization is a common signal processing technique to solve ill-posed inverse problems using \emph{a priori} information about a target signal to recover it accurately. Here we use regularization to solve the distributed denoising task discussed in Section \ref{sec:intro}, starting with a noisy signal $y\in\R^N$ defined on a graph of $N$ sensors. The prior belief we want to
enforce is that the target signal is smooth with respect to the underlying graph topology. The class of regularization terms we consider is $f^{\transpose}\L^r f$ for $r \geq 1$, and the resulting regularization problem has the form
\begin{eqnarray}\label{Eq:reg_prob}
\argmin_f \frac{\tau}{2}\norm{f-y}_2^2+f^{\transpose}\L^r f.
\end{eqnarray}
To see intuitively why incorporating such a regularization term into the objective function encourages smooth signals (with $r=1$ as an example), note that $f^{\transpose}\L f=0$ if and only if $f$ is constant across all vertices, and, more generally
\begin{eqnarray*}
f^{\transpose}\L f=\frac{1}{2}\sum_{n \in V}\sum_{m \sim n}A_{m,n}\left[f(m)-f(n)\right]^2, 
\end{eqnarray*}
so $f^{\transpose}\L f$ is small when the signal $f$ has similar values at neighboring vertices with large weights (i.e., it is smooth).

We now show how our novel method is useful in solving this distributed regularization problem.
\begin{proposition} \label{Prop:reg}
The solution to \eqref{Eq:reg_prob} is given by $Ry$, where $R$ is a graph Fourier multiplier operator of the form \eqref{Eq:gfm_def}, with multiplier $g(\lambda_{\l})=\frac{\tau}{\tau+2\lambda_{\l}^r}$ .\footnote{This filter $g(\lambda_{\l})$ is the graph analog of a first-order Bessel filter 
from classical signal processing of functions on the real line.}
\end{proposition}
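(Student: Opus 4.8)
The plan is to treat \eqref{Eq:reg_prob} as an unconstrained smooth, strictly convex minimization, solve it through its first-order optimality condition, and then diagonalize the resulting linear system in the graph Laplacian eigenbasis so as to read off the multiplier. First I would record that the objective $J(f):=\frac{\tau}{2}\norm{f-y}_2^2+f^\transpose\L^r f$ is strictly convex (assuming the regularization parameter $\tau>0$): the data-fidelity term is strictly convex in $f$, and since $\L$ is real symmetric positive semidefinite, $\L^r$ is positive semidefinite for $r\geq 1$, so the penalty is convex. Hence $J$ has a unique minimizer, characterized by $\nabla J(f)=0$.

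Next I would compute the gradient. Using the symmetry of $\L^r$, one has $\nabla J(f)=\tau(f-y)+2\L^r f$, and setting this to zero yields the normal equations $(\tau\Id+2\L^r)f=\tau y$. I would then invert this system spectrally. Writing the spectral decomposition $\L=\sum_{\l=0}^{N-1}\lambda_\l\,\chi_\l\chi_\l^\transpose$ in terms of the orthonormal eigenvectors of Section \ref{Se:SGT}, orthonormality gives $\L^r=\sum_{\l=0}^{N-1}\lambda_\l^r\,\chi_\l\chi_\l^\transpose$, whence $\tau\Id+2\L^r=\sum_{\l=0}^{N-1}(\tau+2\lambda_\l^r)\chi_\l\chi_\l^\transpose$. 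Because $\lambda_\l\geq 0$ and $\tau>0$, each factor $\tau+2\lambda_\l^r\geq\tau>0$, so the matrix is invertible with inverse sharing the same eigenbasis and eigenvalues $(\tau+2\lambda_\l^r)^{-1}$. Defining $R:=\tau(\tau\Id+2\L^r)^{-1}$ and applying it to $y$, the eigenbasis expansion gives $(Ry)(n)=\sum_{\l=0}^{N-1}\frac{\tau}{\tau+2\lambda_\l^r}\,\hat{y}(\l)\,\chi_\l(n)$, which is precisely the graph Fourier multiplier form \eqref{Eq:gfm_def} with multiplier $g(\lambda_\l)=\frac{\tau}{\tau+2\lambda_\l^r}$, as claimed.

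I do not anticipate a genuine obstacle, as the argument is routine calculus and linear algebra. The only step that warrants a little care is the simultaneous diagonalization of $\L^r$ with $\L$ in the orthonormal eigenbasis, so that the multiplier depends on each eigenvalue only through $\lambda_\l^r$, together with the invertibility of $\tau\Id+2\L^r$. Both follow immediately from the spectral theorem for real symmetric matrices once one observes $\lambda_\l\geq 0$ and $\tau>0$; this also uniformly covers the case of non-integer $r\geq 1$, where $\L^r$ is interpreted via the spectral calculus rather than as a matrix power.
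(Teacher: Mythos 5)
Your proposal is correct and takes essentially the same route as the paper: differentiate the convex objective to obtain the linear optimality condition $(\tau I + 2\L^r)f_* = \tau y$, then diagonalize in the orthonormal Laplacian eigenbasis to read off the multiplier $\tau/(\tau+2\lambda_{\l}^r)$. The paper phrases the diagonalization step as taking the graph Fourier transform of the optimality equation rather than as a spectral inversion of $\tau I + 2\L^r$, but the computation is identical, and your extra remarks on strict convexity and invertibility are only minor strengthenings.
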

\begin{proof}
The objective function in \eqref{Eq:reg_prob} is convex in $f$. 
Differentiating it with respect to $f$, 
any solution $f_*$ to
\begin{eqnarray}\label{Eq:opt_eq}
\L^r f_* + \frac{\tau}{2}(f_*-y)=0
\end{eqnarray}
is a solution to \eqref{Eq:reg_prob}.\footnote{In the case $r=1$, the optimality equation \eqref{Eq:opt_eq} corresponds to the optimality equation in \cite[Section III-A]{elmoataz} with $p=2$ in that paper.} Taking the graph Fourier transform of \eqref{Eq:opt_eq} yields
\begin{eqnarray}\label{Eq:opt_eq_four}
&\widehat{\L^r f_*}(\l) + \frac{\tau}{2}\left(\widehat{f_*}(\l)-\hat{y}(\l)\right)=0, \\
&\hspace{1.3in}\forall \l \in \{0,1,\ldots,N-1\}. \nonumber
\end{eqnarray}
From the real, symmetric nature of $\L$ and the definition of the Laplacian eigenvectors ($\L \chi_{\l} = \lambda_{\l} \chi_{\l}$), we have:
\begin{eqnarray} \label{Eq:lap}
\widehat{\L^r f_*}(\l) = \chi_{\l}^* \L^r f_* = \left(\L^r \chi_{\l}\right)^*f_* = \lambda_{\l}^r\chi_{\l}^* f_* = \lambda_{\l}^r \widehat{f_*}(\l).
\end{eqnarray}
Substituting \eqref{Eq:lap} into \eqref{Eq:opt_eq_four} and rearranging, we have
\begin{eqnarray} \label{Eq:rearr}
\widehat{f_*}(\l)=\frac{\tau}{\tau+2\lambda_{\l}^r}\hat{y}(\l),~\forall \l \in \{0,1,\ldots,N-1\}.
\end{eqnarray}
Finally, taking the inverse graph Fourier transform of \eqref{Eq:rearr}, we have
\begin{eqnarray}\label{Eq:reg_multiplier}
f_*(n)=\sum_{\l=0}^{N-1}\widehat{f_*}(\l)\chi_{\l}(n)=\sum_{\l=0}^{N-1}\left[\frac{\tau}{\tau+2\lambda_{\l}^r}\right]\hat{y}(\l)\chi_{\l}(n),
\\
\forall n \in \{1,2,\ldots,N\}. \nonumber
\end{eqnarray}
\end{proof}

So, one way to do distributed denoising is 
to compute $\tilde{R}y$, the Chebyshev polynomial approximation of $Ry$, in a distributed manner via Algorithm 1. 
We show this now with a numerical example.
We place 500 sensors randomly in the $[0,1] \times [0,1]$ square. We then construct a weighted graph according to the thresholded Gaussian kernel weighting \eqref{Eq:gkw} with $\sigma=0.074$ and $\kappa=0.600$, so that two sensor nodes are connected if 
their physical separation is less than 
0.075. We create a smooth 500-dimensional signal with the $n^{th}$ component given by
$f_n^0 = n_x^2+n_y^2-1$,
where $n_x$ and $n_y$ are 
node $n$'s $x$ and $y$ coordinates in $[0,1]\times[0,1]$. One instance of such a network and signal $f^0$ are shown in Figure \ref{Fig:network}, and the eigenvectors of the graph Laplacian 
are shown in Figure \ref{Fig:eigenvectors}.

Next, we corrupt each component of the signal $f^0$ with uncorrelated additive Gaussian noise with mean zero and standard deviation 0.5. Then we apply the graph Fourier multiplier operator $\tilde{R}$, the Chebyshev polynomial approximation to $R$ from Proposition \ref{Prop:reg}, with $\tau=r=1$. The multiplier and its Chebyshev polynomial approximations are shown in Figure \ref{Fig:filter}, and the denoised signal $\tilde{R}y$ is shown in Figure \ref{Fig:denoising}. We repeated this entire experiment 1000 times, with a new random graph and random noise each time, and the average mean square error for the denoised signals was 0.013, as compared to 0.250 average mean square error for the noisy signals.
\begin{figure}
\centering{
\includegraphics[width=2.55in]{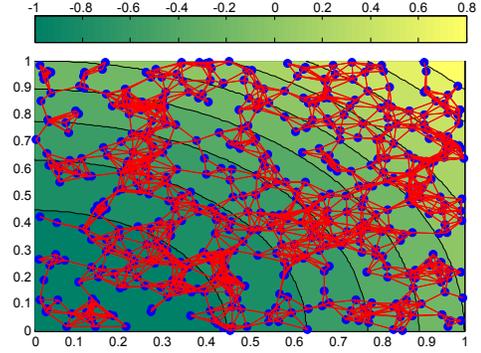}
}\caption{A network of 500 sensors placed randomly in the $[0,1] \times [0,1]$ square. The background colors represent the values of the smooth signal $f^0$. } \label{Fig:network} 
\end{figure}
\begin{figure}[tb]
\centering
\begin{minipage}[b]{0.43\linewidth}
   \centering
   \centerline{\small{$\chi_0$}}
   \centerline{\includegraphics[width=1\linewidth]{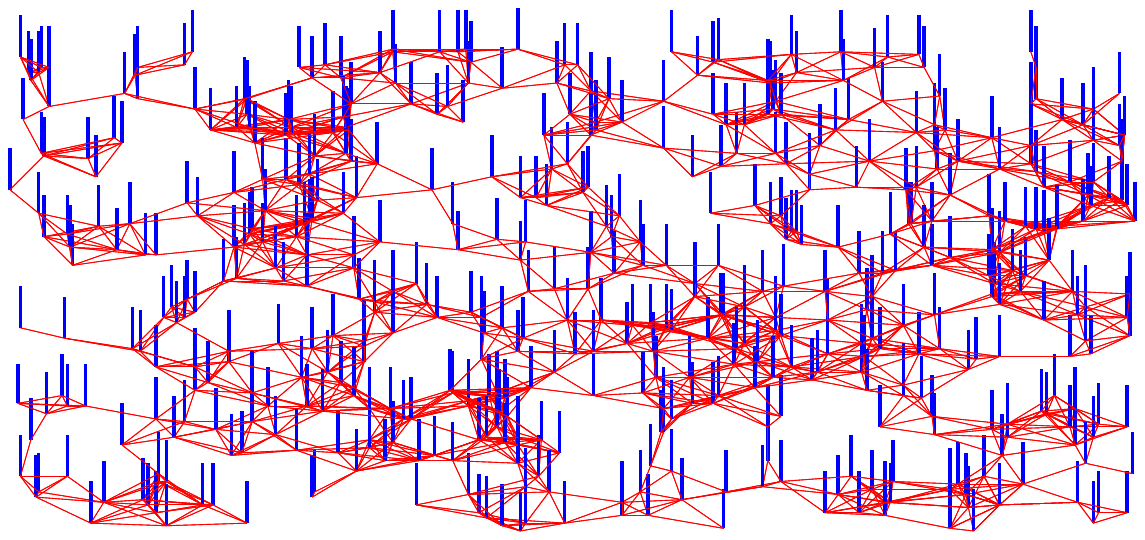}}
\centerline{\small{(a)}}
\end{minipage}
\hfill
\begin{minipage}[b]{0.43\linewidth}
   \centering
   \centerline{\small{$\chi_1$}}
   \centerline{\includegraphics[width=1\linewidth]{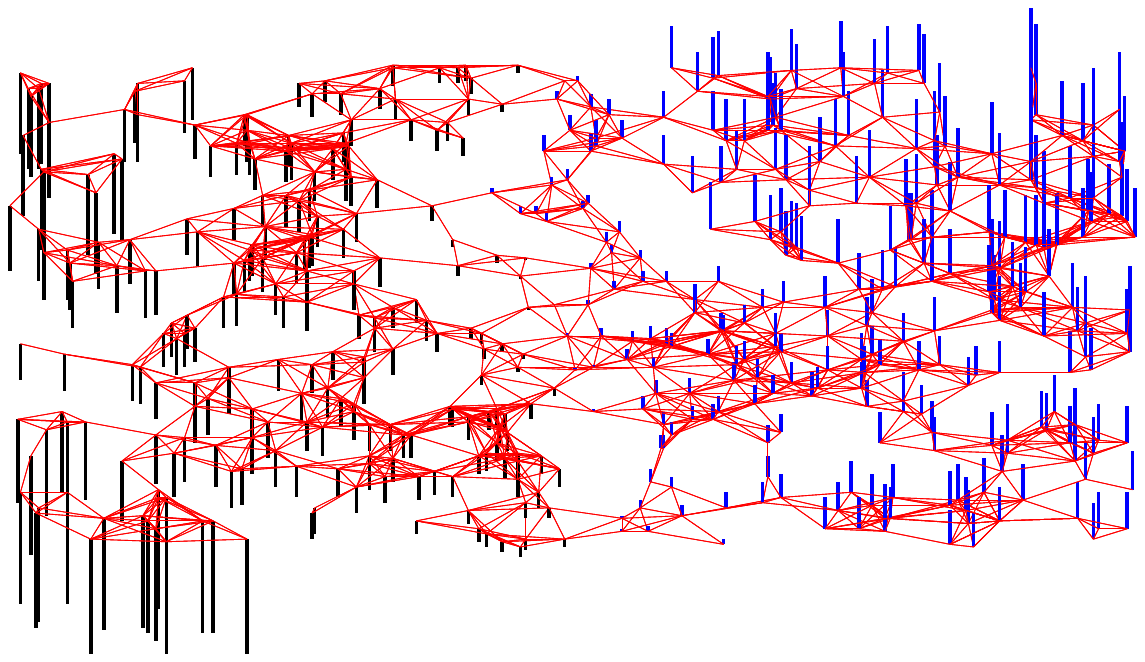}}
\centerline{\small{(b)}}
\end{minipage}\\
\vspace{0.4cm}
\begin{minipage}[b]{0.43\linewidth}
   \centering
   \centerline{\small{$\chi_2$}}
   \centerline{\includegraphics[width=1\linewidth]{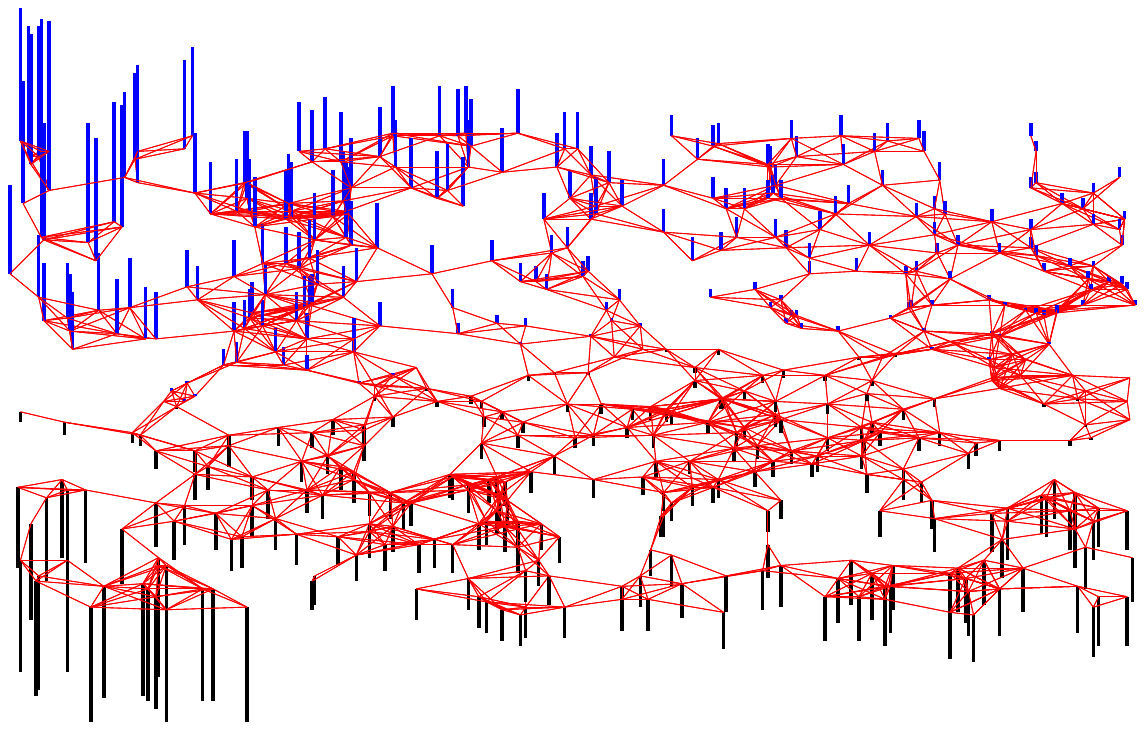}}
\centerline{\small{(c)}}
\end{minipage}
\hfill
\begin{minipage}[b]{0.43\linewidth}
   \centering
   \centerline{\small{$\chi_{50}$}}
   \centerline{\includegraphics[width=1\linewidth]{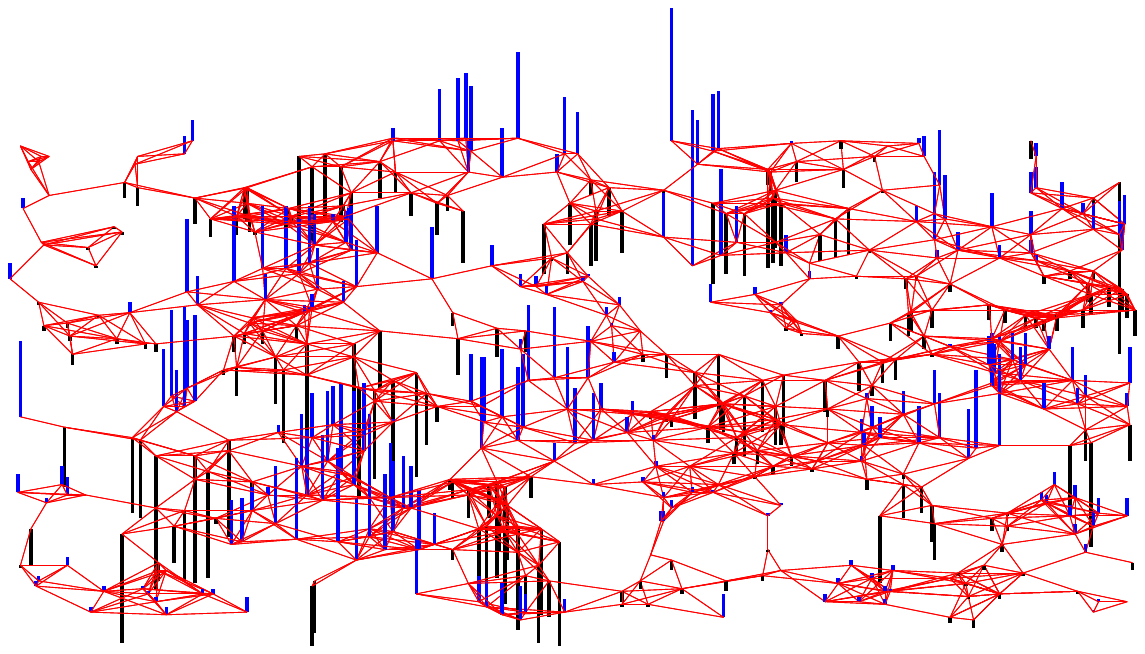}}
\centerline{\small{(d)}}
\end{minipage}\\
\caption {Some eigenvectors of the Laplacian of the graph shown in Figure \ref{Fig:network}. The blue bars represent positive values and the black bars negative values. (a) $\chi_0$, the constant eigenvector associated with $\lambda_{0}=0$. 
(b) $\chi_1$, the \emph{Fiedler vector} associated with the lowest strictly positive eigenvalue, nicely separates the graph into two components. (c) $\chi_2$ is also a smooth eigenvector. (d) $\chi_{50}$ is far less smooth
with some large differences across neighboring nodes.}
  \label{Fig:eigenvectors}
\end{figure}
\begin{figure}
\centering{
\includegraphics[width=2in]{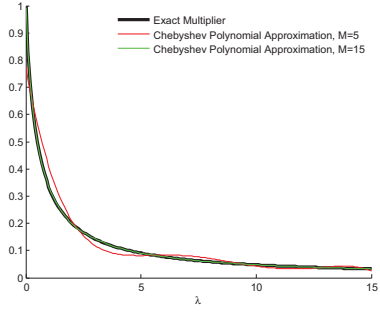}
}\caption{The regularizing multiplier $\frac{\tau}{\tau+2\lambda_{\l}^r}$ associated with the graph Fourier multiplier operator $R$ from Proposition \ref{Prop:reg}. Here, $r=\tau=1$. Shifted Chebyshev polynomial approximations to the multiplier are shown for different values of the approximation order $M$.} \label{Fig:filter}
\end{figure}
\begin{figure}[tb]
\centering
\begin{minipage}[b]{0.43\linewidth}
   \centering
   \centerline{\small{Original Signal}}
   \centerline{\includegraphics[width=1\linewidth]{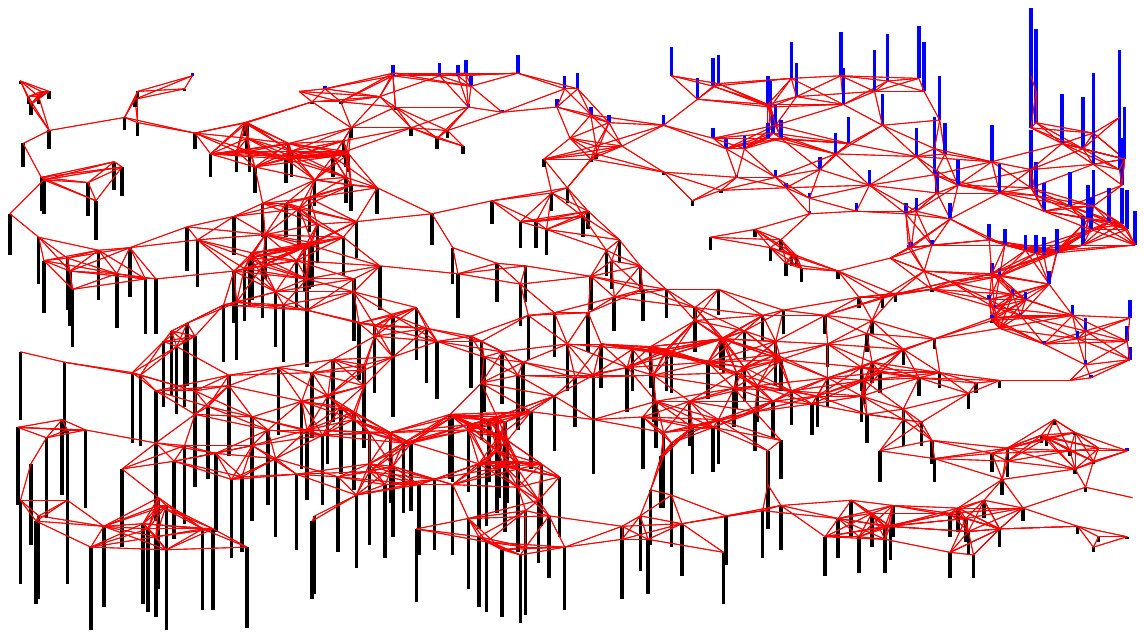}}
\centerline{\small{(a)}}
\end{minipage}
\hfill
\begin{minipage}[b]{0.43\linewidth}
   \centering
   \centerline{\small{Noise}}
   \centerline{\includegraphics[width=1\linewidth]{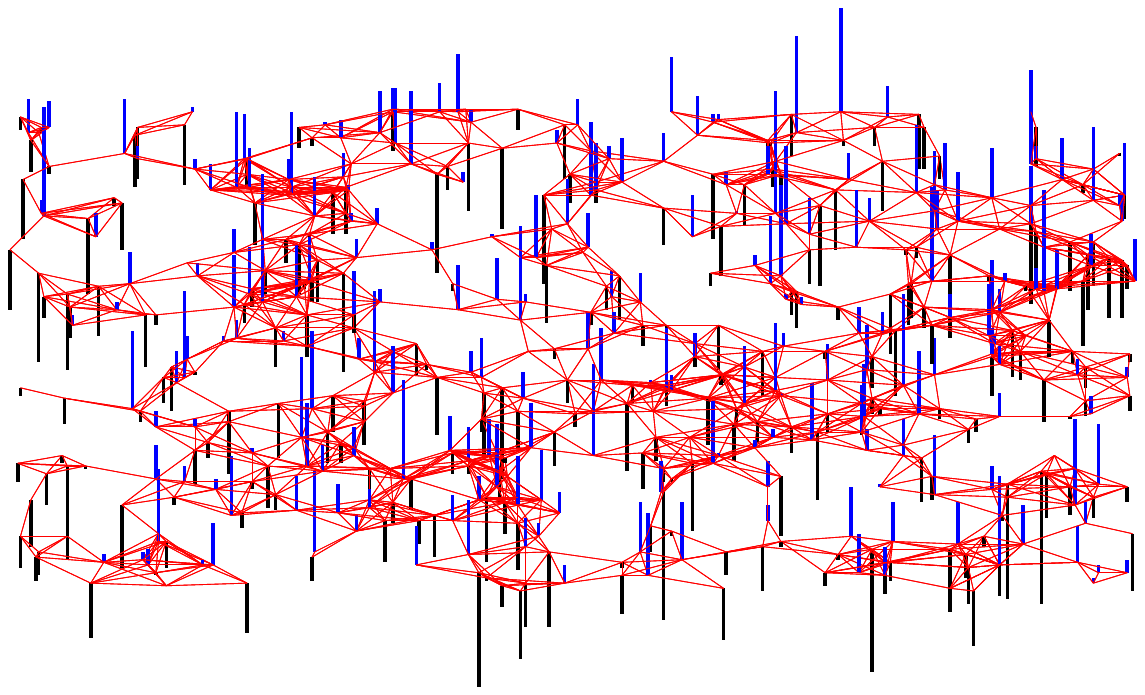}}
\centerline{\small{(b)}}
\end{minipage}\\
\vspace{0.4cm}
\begin{minipage}[b]{0.43\linewidth}
   \centering
   \centerline{\small{Noisy Signal}}
   \centerline{\includegraphics[width=1\linewidth]{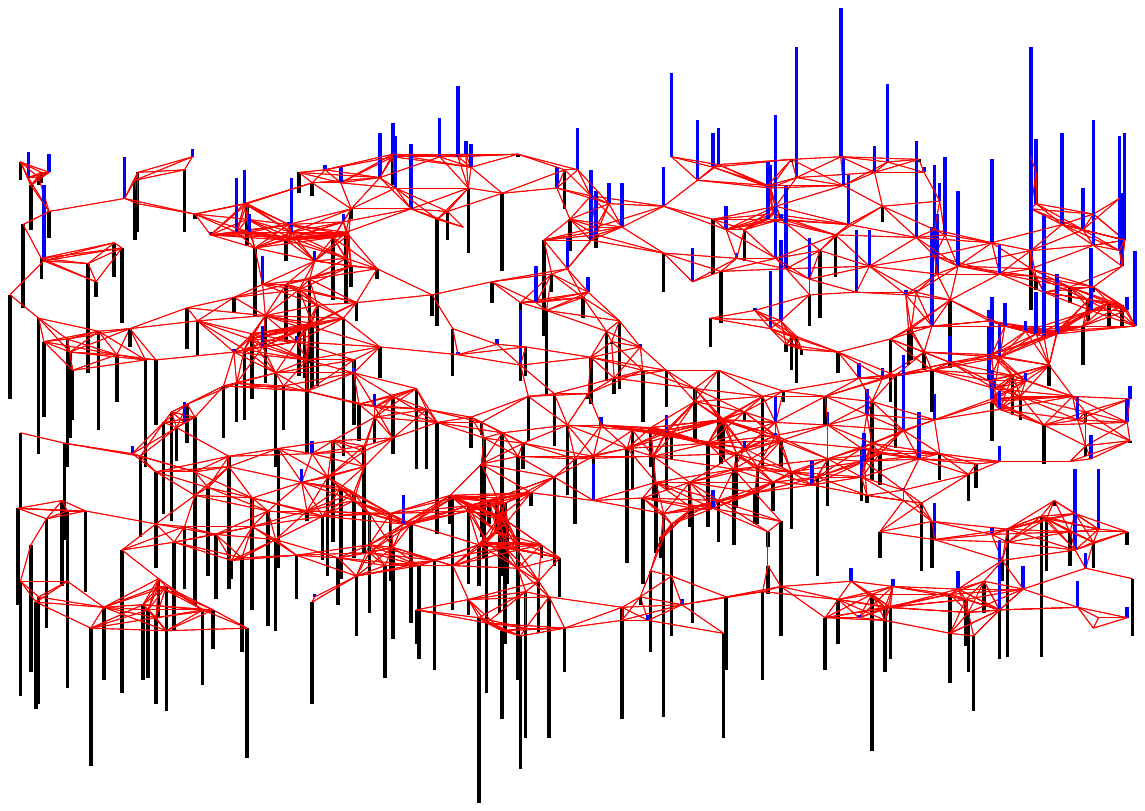}}
\centerline{\small{(c)}}
\end{minipage}
\hfill
\begin{minipage}[b]{0.43\linewidth}
   \centering
   \centerline{\small{Denoised Signal}}
   \centerline{\includegraphics[width=1\linewidth]{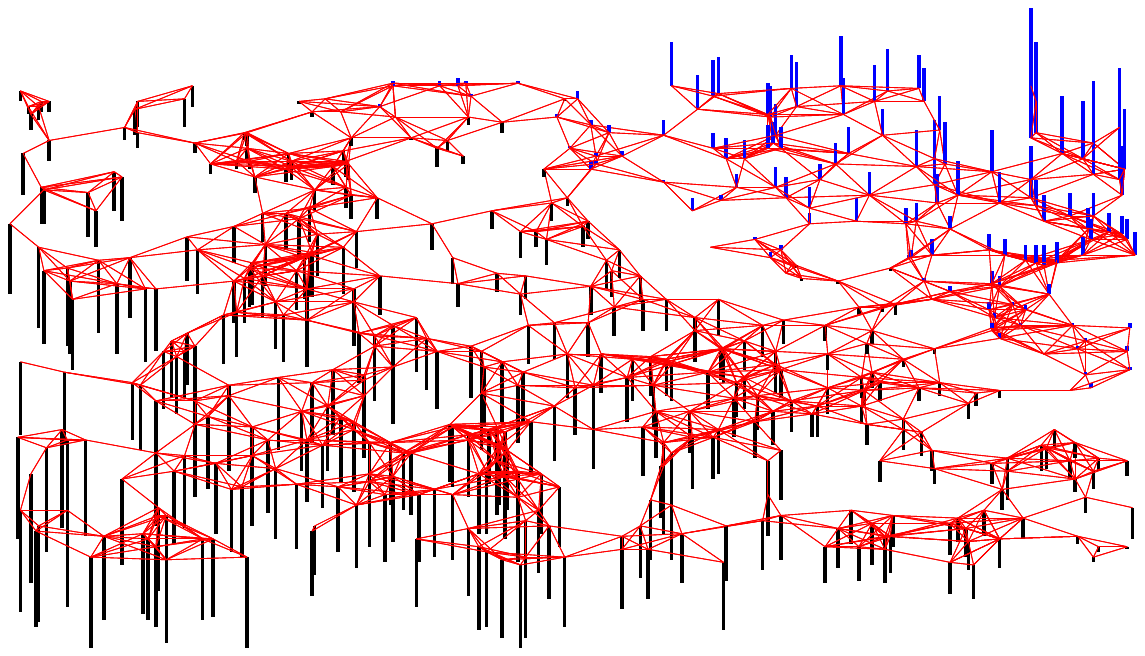}}
\centerline{\small{(d)}}
\end{minipage}\\
\caption {A denoising example on the graph shown in Figure \ref{Fig:network}, using the regularizing multiplier shown in Figure \ref{Fig:filter}. (a) The original signal $n_x^2+n_y^2-1$, where $n_x$ and $n_y$ are the $x$ and $y$ coordinates of sensor node $n$. (b) The additive Gaussian noise. (c) The noisy signal $y$. (d) The denoised signal $\tilde{R}y$.}
  \label{Fig:denoising}
\end{figure}

We conclude this section by returning 
to the distributed binary classification task discussed in the introduction. In \cite{belkin_matveeva}, Belkin \emph{et al.} show that the regularizer $f^{\transpose}\L^r f$ also works well in graph-based semi-supervised learning. One approach to distributed binary classification is to let $y_n$ be the labels (-1 or 1) of those nodes who know their labels, and 0 otherwise. Then the nodes compute $\tilde{R}y$ in a distributed manner via Algorithm 1, and each node $n$ sets it label to 1 if
$(\tilde{R}y)_n \geq 0$ and -1 otherwise. We believe our approach to distributedly applying graph Fourier multipliers 
can also be used
for more general 
learning problems, but we leave this for future work.


\subsection{Distributed Wavelet Denoising}\label{Se:denoising}
In this section, we consider an alternate method of distributed denoising that may be better suited to signals that are piecewise smooth on the graph, but not necessarily globally smooth. The setup is the same as in Section \ref{Se:reg}, with a noisy signal $y \in \R^N$,
and each sensor $n$ observing $y_n$. Instead of starting with a prior that the signal is globally smooth, we start with a prior belief that the signal is sparse in the spectral graph wavelet domain \cite{LTS-ARTICLE-2009-053}.
The spectral graph wavelet transform, $W$, defined in \cite{LTS-ARTICLE-2009-053}, is precisely of the form of $\Phi$ in \eqref{Eq:operator_def}. Namely, it is composed of one 
multiplier, $h(\cdot)$, that acts as a low-pass filter to stably represent the signal's low frequency content, and $J$ wavelet operators, defined by $g_j(\lambda_{\l})=g(t_j \lambda_{\l})$, where $\{t_j\}_{j=1,2,\ldots,J}$ is a set of scales and $g(\cdot)$ is the wavelet multiplier 
that acts as a band-pass filter.

The most common way to incorporate a sparse prior in a centralized setting is to regularize via a weighted version of the \emph{least absolute shrinkage and selection operator (lasso)} \cite{lasso}, also called \emph{basis pursuit denoising} \cite{basispursuit}:
\begin{eqnarray} \label{Eq:lasso}
\argmin_{a}~\frac{1}{2}\norm{y-W^*{a}}_2^2+\norm{{a}}_{1,\mu}~,
\end{eqnarray}
where
$\norm{{a}}_{1,\mu}:=\sum_{i=1}^{N(J+1)} \mu_i \left|{a}_i\right|$.
The optimization problem in \eqref{Eq:lasso} can be solved for example by iterative soft thresholding \cite{DDD}. The initial estimate of the wavelet coefficients ${a}^{(0)}$ is arbitrary, and at each iteration of the soft thresholding algorithm, the update of the estimated wavelet coefficients is given by
\begin{eqnarray}\label{Eq:ista_update}
{a}_i^{(k)}=\Scal_{{\mu}_i {\tau}}\left(\Bigl({a}^{(k-1)}+{\tau}W\left[y-W^*{a}^{(k-1)}\right]\Bigr)_i\right),  \nonumber \\
i=1,2,\ldots,N(J+1);~k=1,2,\ldots
\end{eqnarray}
where ${\tau}$ is the step size and $\Scal_{{\mu}_i{\tau}}$ is the shrinkage or soft thresholding operator
\begin{eqnarray*}
\Scal_{{\mu}_i{\tau}}(z):=\left\{
\begin{array}{ll}
0&,\mbox{ if } \mid z \mid \leq {\mu}_i{\tau} \\
z-\mbox{sgn}(z){\mu}_i{\tau}&, \mbox{ o.w.}
\end{array}
\right. .
\end{eqnarray*}
The iterative soft thresholding algorithm converges to ${{a}}_*$, the minimizer of \eqref{Eq:lasso}, if ${\tau} < \frac{2}{{\norm{W^*}^2}}$ \cite{combettes}. The final denoised estimate of the signal is then given by $W^*{{a}}_*$.

We now turn to the issue of how to implement the above algorithm in a distributed fashion by sending messages between neighbors in the network. 
One option would be to use the distributed lasso algorithm of \cite{dlasso}, which is a special case of the Alternating Direction Method of Multipliers \cite[p.~253]{par_dist_opt_book}. In every iteration of that algorithm, each node transmits its current estimate of \emph{all} the wavelet coefficients to its local neighbors. 
With a transform the size of the spectral graph wavelet transform, 
this requires $2\card{{E}}$ total messages 
at every iteration, with each message being a vector of length  $N(J+1)$.
A method where the amount of communicated information does not grow with $N$ (beyond the number of edges, $\card{{E}}$) would be highly preferable.

The 
Chebyshev polynomial approximation of the spectral graph wavelet transform
allows us to accomplish this goal.
Our approach is to approximate $W$ by $\tilde{W}$, and use the distributed implementation of the approximate wavelet transform and its adjoint to
perform iterative soft thresholding.
In the first soft thresholding iteration, each node $n$ must learn $(\tilde{W}y)_{(j-1)N+n}$ at all scales $j$, via Algorithm 1. These coefficients are then stored for future iterations. In the $k^{th}$ iteration, each node $n$ must learn the $J+1$ coefficients of $\tilde{W}\tilde{W}^*{a}^{(k-1)}$ centered at $n$, by sequentially applying the operators $\tilde{W}^*$ and $\tilde{W}$ in a distributed manner via the methods of Sections \ref{Se:adj_d} and \ref{Se:forward}, respectively. Finally, when a stopping criterion for the soft thresholding is satisfied,
the adjoint operator $\tilde{W}^*$ is applied again in a distributed manner to the resulting coefficients $\tilde{{{a}}}_*$, 
and node $n$'s denoised estimate of its signal is $\left(\tilde{W}^*\tilde{{{a}}}_*\right)_n$.

We now examine the communication requirements of this approach. Recall from Section \ref{Se:distribution} that
$2M\card{{E}}$ messages of length 1 are required to compute $\tilde{W}y$ in a distributed fashion.
Distributed computation 
of $\tilde{W}\tilde{W}^*{a}^{(k-1)}$, the other term needed in the iterative thresholding update \eqref{Eq:ista_update}, requires $2M\card{{E}}$ messages of length $J+1$ and $2M\card{{E}}$ messages of length $1$. 
 The final application of the adjoint operator $\tilde{W}^*$ to recover the denoised signal estimates requires another $2M\card{{E}}$ messages, each a vector of length $J+1$. Therefore, the Chebyshev polynomial approximation to the spectral graph wavelet transform enables us to iteratively solve the weighted lasso 
in a distributed manner where the communication workload only scales with the  size of the network through $\card{E}$, and is otherwise independent of the network dimension $N$.

\section{Concluding Remarks and Future Work} \label{Se:conclusion}

We presented a novel method to distribute a class of linear operators called 
unions of graph Fourier multiplier operators. The main idea is to approximate the graph Fourier multipliers by Chebyshev polynomials, whose recurrence relations make them readily amenable to distributed computation in a sensor network. Key takeaways from the discussion and application examples include:
\begin{itemize}
\item A number of distributed signal processing tasks can be represented as distributed applications of unions of graph Fourier multiplier operators (and their adjoints) to signals on weighted graphs. Examples 
    include distributed smoothing, denoising, and semi-supervised learning.
\item The graph Fourier multiplier operators are the graph analog of filter banks, as they reshape functions' frequencies through multiplication in the Fourier domain.
\item The amount of communication required to perform the distributed computations only scales with the size of the network through the number of edges of the communication graph, which is usually sparse. Therefore, the method is well suited to large-scale sensor networks.
\end{itemize}

Our ongoing work includes extending the scope and depth of our application examples. In addition to considering more applications and larger size networks, we plan a more thorough empirical comparison of the computation and communication requirements of the approach described in this paper to alternative distributed optimization methods. The second major line of ongoing work is to analyze robustness issues that arise in real networks. For instance, we would like to incorporate quantization and communication noise into the sensor network model, in order to see how these propagate when using the Chebyshev polynomial approximation approach to distributed signal processing tasks. It is also important to analyze the effects of a sensor node dropping out of the network or communicating nodes losing synchronicity to ensure that the proposed method is stable to these disturbances.

\bibliographystyle{IEEEtran}
\bibliography{distributed}

\end{document}